\crefname{equation}{eq.}{eqs.}
\crefname{enumi}{}{}
\crefname{icase}{case}{cases}
\crefname{ipart}{part}{parts}
\crefname{iprop}{property}{properties}
\crefname{iinv}{invariant}{invariants}
\newcommand{\fO}{\mathcal{O}}
\newcommand{\eps}{\varepsilon}
\newcommand{\NP}{\ensuremath{\mathsf{NP}}}
\newcommand{\Wone}{\ensuremath{\mathsf{W}[1]}}
\newcommand{\LPPM}{\textsc{Left PPM}\xspace}
\newcommand{\PPM}{\textsc{PPM}\xspace}
\newcommand{\PSI}{\textsc{PSI}\xspace}
\newcommand{\GapSPPM}{$(n^{\eps k}, n^{(1-\eps) \cdot k})$\textsc{-GAP \#PPM}\xspace}
\newcommand{\GapSPPMh}{$(n^{(\eps/2) \cdot k}, n^{(1-\eps/2) \cdot k})$\textsc{-GAP \#PPM}\xspace}
\title{Inapproximability of Counting Permutation Patterns}
\author{Michal Opler}
      {Faculty of Information Technology, Czech Technical University in Prague, Czech Republic}
      {michal.opler@fit.cvut.cz}
      {https://orcid.org/0000-0002-4389-5807}
      {}
\authorrunning{M. Opler}
\keywords{permutation patterns, approximate counting, inapproximability,
          exponential time hypothesis, parameterized complexity}
\begin{document}
\nolinenumbers

\maketitle

\begin{abstract}
Detecting and counting copies of permutation patterns are fundamental algorithmic problems, with applications in the analysis of rankings, nonparametric statistics, and property testing tasks such as independence and quasirandomness testing.
From an algorithmic perspective, there is a sharp difference in complexity between detecting and counting the copies of a given length-$k$ pattern in a length-$n$ permutation.
The former admits a $2^{\fO(k^2)} \cdot n$ time algorithm (Guillemot and Marx, 2014) while the latter cannot be solved in time $f(k)\cdot n^{o(k/\log k)}$ unless the Exponential Time Hypothesis (ETH) fails (Berendsohn, Kozma, and Marx, 2021).
In fact already for patterns of length~4, exact counting is unlikely to admit near-linear time algorithms under standard fine-grained complexity assumptions (Dudek and Gawrychowski, 2020).

Recently, Ben-Eliezer, Mitrovi\'c and Srivastava~(2026) showed that for patterns of length up to~5, a $(1+\varepsilon)$-approximation of the pattern count can be computed in near-linear time, yielding a separation between exact and approximate counting for small patterns, and conjectured that approximate counting is asymptotically easier than exact counting in general.
We strongly refute their conjecture by showing that, under ETH, no algorithm running in time $f(k)\cdot n^{o(k/\log k)}$ can approximate the number of copies of a length-$k$ pattern within a multiplicative factor $n^{(1/2-\varepsilon)k}$.
The lower bound on runtime matches the conditional lower bound for exact pattern counting, and the obtained bound on the multiplicative error factor is essentially tight, as an $n^{k/2}$-approximation can be computed in $2^{\mathcal{O}(k^2)}\cdot n$ time using an algorithm for pattern detection.
\end{abstract}

\section{Introduction}\label{sec:intro}

Given two permutations $\pi = \pi_1,\dots,\pi_k$ and $\tau = \tau_1,\dots,\tau_n$, we say that $\tau$ \emph{contains} $\pi$ as a pattern if there exist indices $1 \le i_1 < \cdots < i_k \le n$ such that the subsequence $\tau_{i_1},\dots,\tau_{i_k}$ has the same relative order as $\pi$.
Otherwise, we say that $\tau$ \emph{avoids}~$\pi$.
Permutation patterns have been the subject of extensive study in enumerative combinatorics, with deep connections to sorting algorithms, forbidden substructures, and the structural theory of permutations.

The algorithmic problem associated with this notion is \textsc{Permutation Pattern Matching} (\PPM), which asks, given a pattern permutation $\pi$ and a text permutation $\tau$, whether $\tau$ contains~$\pi$.
A related counting variant asks for the number of occurrences of $\pi$ in~$\tau$.
The computational complexity of \PPM{} and its counting variant has been studied extensively under various restrictions, which we briefly review below.

\subparagraph{Pattern matching.}
In the most general form, it has been shown by Bose, Buss and Lubiw~\cite{Bose_PPM} that \PPM is \NP-complete in general.
There are two natural restrictions that can be imposed to restore tractability:
either restricting the pattern or the text to a fixed class of permutations, or focusing on instances where $k$, the length of the pattern, is small.
In the former case, the natural restriction is to fix a permutation $\sigma$ and only consider patterns (or texts) that themselves do not contain~$\sigma$.
Jelínek and Kynčl~\cite{JelinekK17} provided a full complexity dichotomy for \PPM with $\sigma$-avoiding patterns and some initial results for the case when the text is also required to avoid $\sigma$.
Later, Jelínek, Opler and Pekárek~\cite{JelinekOP21} resolved the complexity of \PPM with $\sigma$-avoiding texts up to essentially four remaining open cases.

In the case of small~$k$, there is a trivial algorithm that solves both pattern matching and pattern counting in $\fO(k \cdot n^k)$ time simply by enumerating over all possible $k$-tuples of elements in the text.
In a breakthrough result, Guillemot and Marx~\cite{GM_PPM} showed that permutation pattern matching can be solved in $2^{\fO(k^2 \log k)} \cdot n$ time, with a later improvement to $2^{\fO(k^2)} \cdot n$ due to Fox~\cite{jfox}, establishing that \PPM{} is fixed-parameter tractable with respect to $k$.
The algorithm uses a win-win argument based on the celebrated proof of the Füredi-Hajnal conjecture~\cite{FurediHajnal} by Marcus and Tardos~\cite{MarcusTardos}.

\subparagraph{Pattern counting.}
Unlike detection, the aforementioned algorithm of Guillemot and Marx crucially cannot be adapted to pattern counting.
Instead, there has been a line of work improving upon the trivial $\fO(k \cdot n^k)$-time algorithm.
First, Albert, Aldred, Atkinson and Holton~\cite{AlbertAAH01} designed an algorithm for pattern counting in time $n^{2k/3 + o(k)}$.
Their ideas were later further developed by Ahal and Rabinovich~\cite{AhalR08} to count in time $n^{0.47k + o(k)}$.
The current best algorithm in this regime is due to Berendsohn, Kozma and Marx~\cite{BerendsohnKM21} and runs in time $n^{k/4 + o(k)}$.
Significant improvements of the runtime are unlikely as the same authors showed that no algorithm for pattern counting runs in time $f(k)\cdot n^{o(k/\log k)}$ unless the exponential-time hypothesis (ETH) fails.
Remarkably, only a slightly weaker conditional lower bound holds even when we impose additional structural restrictions on the patterns.
Jelínek, Opler and Pekárek~\cite{JelinekOP21a} showed that for arbitrary $\sigma$ of length at least~6, no algorithm for counting $\sigma$-avoiding patterns runs in time $f(k) \cdot n^{o(k/\log^2 k)}$ unless ETH fails.


\subparagraph{Counting short patterns.}
Independently of the parameterized regime, a line of work focused on the regime where $k$ is a small constant.
For $k=2$, this is the classical problem of counting the number of inversions in a sequence.
A classical $\fO(n \log n)$-time solution can be obtained by a modification of merge sort while the fastest known algorithm (in the Word RAM model) with runtime $\fO(n \sqrt{\log n})$ is due to Chan and P{\u{a}}tra{\c{s}}cu~\cite{ChanP10}.
A wide-range of problems can be reduced to counting inversions, including string processing and computational geometry problems, thus inheriting the same $\fO(n \sqrt{\log n})$ runtime \cite{KempaK19,KempaK25}.

Even-Zohar and Leng~\cite{Even-ZoharL21} introduced a novel dynamic programming approach to counting patterns which allowed them to count all 3-patterns and some (8 out of 24) 4-patterns in $\tilde{\fO}(n)$ time\footnote{The $\tilde{\fO}(\cdot)$ notation hides polylogarithmic factors.}.
Dudek and Gawrychowski~\cite{DudekG20} proved that counting these ``hard'' 4-patterns is actually equivalent (via bidirectional reductions in near-linear time) to counting 4-cycles in sparse graphs and thus, it requires $n^{1+\Omega(1)}$ time conditional on, e.g., the Strong 3-SUM conjecture~\cite{JinX23}.
Recently, Beniamini and Lavee~\cite{BeniaminiL25} expanded upon the ideas of Even-Zohar and Leng to obtain algorithms for counting 5-patterns in $\tilde{\fO}(n^{7/4})$ time and 6-patterns and 7-patterns in $\tilde{\fO}(n^2)$ time.

\subparagraph{Approximate counting.}
Since exact counting is unlikely to be solvable in near-linear time already for $k=4$, it is natural to consider the relaxation to approximate counting.
The most desired outcome is an efficient $(1+\eps)$-approximation algorithm for each fixed positive $\eps$, i.e., an algorithm that reports a value between $C/(1+\eps)$ and $(1+\eps) \cdot C$ where $C$ is the true number of $\pi$-copies in $\tau$.
In the case of $k=2$, there has been a long line of work focusing on approximating the number of inversions that culminated with an $\fO(n)$-time $(1+\eps)$-approximation algorithm by Chan and P{\u{a}}tra{\c{s}}cu~\cite{ChanP10}.
For $k=3$, we have exact counting algorithms in near-linear time so there is little room for improvement with approximation.
This was the whole picture until very recently when Ben-Eliezer, Mitrovi\'c and Srivastava~\cite{BenEliezerMS26} obtained $\tilde{\fO}(n)$-time $(1+\eps)$-approximation algorithms for counting all 4-patterns and 5-patterns.
This yields a separation between the complexities of exact and approximate counting already for $k=4$.
Notably, they also conjectured that the time complexity of approximate counting is asymptotically smaller than that of exact counting~\cite[Conjecture 1.5]{BenEliezerMS26}.

\subparagraph{Hard variants of pattern matching.}
While detecting unrestricted $k$-patterns admits a linear time algorithm for every fixed~$k$, i.e., an FPT algorithm with respect to~$k$, a similarly efficient algorithm cannot exist for counting $k$-patterns unless ETH fails.
We might, therefore, ask what happens when we consider, instead of counting patterns, detecting patterns with some additional constraints.
It turns out that \PPM becomes \Wone-hard even under mild additional constraints and thus is unlikely to admit FPT algorithms with respect to $k$.

Several such constraints were explored by Bruner and Lackner~\cite{BrunerL2013}.
As an intermediate problem, they proved \Wone-hardness of \textsc{Segregated PPM} where the elements of both pattern and text are partitioned by value into small and large, and the task is to find an embedding which maps small elements to small elements and large elements to large elements only.
Consequently, Bruner and Lackner~\cite[Theorem 5.5]{BrunerL2013} deduced \Wone-hardness of \textsc{Vincular PPM} where some elements of the pattern are required to map to consecutive elements in the text.
Upon closer inspection, their reduction produces instances where only the first element of the pattern is required to map to the first element of the text.
We refer to such copies as left-aligned and they play a crucial role in our result.

In a different direction, Guillemot and Marx~\cite{GM_PPM} showed that a 3-dimensional variant of \PPM is \Wone-hard and thus, their FPT algorithm cannot be extended to permutations in higher dimensions.
Other \Wone-hard variants of \PPM include \textsc{Partitioned PPM}, where each element of the pattern has prescribed possible locations in the text~\cite{BerendsohnKM21,GM_PPM}, and its slightly more relaxed variant \textsc{Surjective Colored PPM}~\cite{Berendsohn2019}.

\subsection*{Our contribution}

We show that there is no FPT algorithm for approximate pattern counting under the exponential-time hypothesis, even if we allow the multiplicative error to be as large as $n^{(1/2-\eps) \cdot k}$ for arbitrarily small positive $\eps$.
In fact, we rule out the existence of such an approximation algorithm with runtime $f(k) \cdot n^{o(k/\log k)}$, matching the conditional lower bound on exact counting due to Berendsohn, Kozma and Marx~\cite[Theorem 4]{BerendsohnKM21}.
This strongly refutes the conjecture by Ben-Eliezer, Mitrovi\'c and Srivastava~\cite[Conjecture 1.5]{BenEliezerMS26}.

\begin{restatable}{theorem}{mainresult}
\label{thm:inapprox}
	For arbitrary $0 < \eps < 1/2$, an algorithm computing the number of copies of a given $k$-pattern with $n^{(1/2-\eps) \cdot k}$-multiplicative error in $f(k) \cdot n^{o(k/\log k)}$ time would refute ETH.
\end{restatable}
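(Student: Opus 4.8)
The plan is to reduce from exact \SPPM, using the known ETH lower bound of Berendsohn, Kozma and Marx that counting $k$-patterns admits no $f(k) \cdot n^{o(k/\log k)}$-time algorithm. The key idea is a \emph{gap amplification} via padding: starting from an instance $(\pi, \tau)$ with $\lvert\pi\rvert = k$ and $\lvert\tau\rvert = n$, I will build a new instance $(\pi', \tau')$ whose pattern length $k'$ is a constant multiple of $k$ and whose text length $n'$ is polynomial in $n$, such that the \emph{exact} count of $\pi$-copies in $\tau$ determines which side of a huge multiplicative gap the count of $\pi'$-copies in $\tau'$ falls on. If an $n^{(1/2-\eps)k}$-approximation could be computed quickly, we could read off from it whether $\tau$ contains a positive number of $\pi$-copies (or, more precisely, distinguish the count being in one of two far-apart ranges), contradicting the hardness of exact counting once the gap exceeds the approximation factor.

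Concretely, I would take $\pi'$ to be the direct sum $\pi \oplus \delta$ where $\delta$ is some simple ``gadget'' pattern of length $\ell = \Theta(k)$ (e.g.\ an increasing sequence or a suitably chosen rigid pattern), and $\tau'$ the direct sum $\tau \oplus \gamma$ where $\gamma$ is a highly structured text of length $m = \mathrm{poly}(n)$ chosen so that $\gamma$ contains $\delta$ in a controlled number of ways — ideally $\gamma$ is $\delta$-avoiding if we want clean separation, or contains exactly one copy. Because copies of a direct sum decompose as a choice of split point plus a copy of each summand on either side, the number of $\pi'$-copies in $\tau'$ can be expressed as a sum over split positions of (copies of $\pi$ in a prefix) $\times$ (copies of $\delta$ in a suffix). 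By engineering the gadget so that the $\delta$-factor is astronomically larger on instances where $\pi$ appears than where it does not — or more robustly, by using several gadget blocks and amplifying a factor of the form $m^{\Theta(k)}$ — one makes the ratio between the ``yes'' and ``no'' counts exceed $n'^{(1/2-\eps) k'}$. Choosing $m$ as a small polynomial in $n$ keeps $n' = \mathrm{poly}(n)$, so an $f(k')\cdot n'^{o(k'/\log k')}$ algorithm on the padded instance runs in $f(k)\cdot n^{o(k/\log k)}$ time on the original, the desired contradiction.

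The delicate point — and the step I expect to be the main obstacle — is calibrating the gadget so that the multiplicative gap is genuinely $n'^{(1/2-\eps)k'}$ rather than something weaker: the exponent $1/2$ in the statement is essentially tight (matching the trivial $n^{k/2}$ upper bound via detection), so the padding must be close to optimally efficient, wasting almost no ``length budget.'' This likely means the gadget text should pack in roughly $m^{\ell/2}$ copies of the gadget pattern $\delta$ on one branch while the original-instance branch contributes a factor that is either $0$ or at least $1$; balancing $\ell$ against $k$ and $m$ against $n$ to push the ratio up to $n'^{(1/2-\eps)k'}$ while keeping $k' = \Theta(k)$ and $n' = \mathrm{poly}(n)$ requires careful bookkeeping. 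A secondary subtlety is ensuring the decomposition of direct-sum copies does not introduce unwanted cross terms that blur the gap — this is where choosing $\delta$ and $\gamma$ to be rigid (few automorphisms, few embeddings) pays off. If a single-block direct-sum construction does not reach exponent $1/2$, I would iterate the padding or use a product-style construction (e.g.\ inflations) to compound the gap multiplicatively across $\Theta(k)$ independent blocks.
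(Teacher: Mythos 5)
There is a genuine gap, and it lies at the very start: your reduction source does not give you any hardness to transfer. You propose to reduce from exact \SPPM{} and to use the approximation algorithm to ``read off whether $\tau$ contains a positive number of $\pi$-copies.'' But distinguishing zero from at least one copy of an unrestricted $k$-pattern is exactly \PPM{} detection, which is solvable in $2^{\fO(k^2)}\cdot n$ time (Guillemot--Marx); being able to do it quickly contradicts nothing. The alternative phrasing, ``distinguish the count being in one of two far-apart ranges,'' is precisely the gap problem whose hardness is the heart of the theorem, and it does \emph{not} follow in any black-box way from the ETH lower bound for exact counting: an $n^{(1/2-\eps)k}$-approximation gives you no handle on the exact count, so no contradiction with exact-counting hardness is obtained. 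Relatedly, your direct-sum construction $\pi'=\pi\oplus\delta$, $\tau'=\tau\oplus\gamma$ only multiplies the (instance-independent) number of $\delta$-copies in $\gamma$ against the number of $\pi$-copies in $\tau$; the resulting gap separates ``$\tau$ avoids $\pi$'' from ``$\tau$ contains $\pi$'', which, again, is the easy detection problem. (It also leaves uncontrolled cross terms, e.g.\ copies of $\pi'$ lying entirely inside $\gamma$ or straddling the two summands, which you acknowledge but do not resolve.)

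The missing idea is to start from a detection problem that \emph{is} hard under ETH and whose defining constraint can be amplified. The paper first proves (by a self-contained reduction from \textsc{Partitioned Subgraph Isomorphism}) that \LPPM{} --- detecting a \emph{left-aligned} copy, i.e.\ one using the leftmost text element --- is \Wone-hard and has no $f(k)\cdot n^{o(k/\log k)}$ algorithm under ETH. The gap-producing step then inflates the leftmost element of $\pi$ by an increasing sequence of length $\alpha k$ and the leftmost element of $\tau$ by a layered permutation with $\alpha k$ layers of length $n^{\alpha}$ each. A left-aligned copy blows up into at least $n^{\alpha^2 k}\ge (n')^{(1-\eps)k'}$ copies, while in the no-case one shows (using the layered structure of the initial block) that no $\pi'$-copy can touch the inflated block at all, so the count stays below $n^{k'}\le (n')^{\eps k'}$. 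This is exactly the mechanism your padding lacks: the amplification must be tied to a structurally constrained, provably hard copy (here, one forced through the leftmost element), not to mere containment versus avoidance. Without first establishing hardness of such a constrained detection problem, no calibration of $\delta$, $\gamma$, $\ell$, or $m$ can rescue the argument.
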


Furthermore, the bound on the multiplicative error cannot be significantly improved since we can easily compute an $n^{1/2 \cdot k}$-approximation in $2^{\fO(k^2)} \cdot n$ time.
It suffices to invoke the FPT algorithm for \PPM by Guillemot and Marx~\cite{GM_PPM} and output~$0$ if the text does not contain the pattern at all, and $n^{1/2\cdot k}$ otherwise.

Our results are fully self-contained.
In \cref{sec:lppm}, we show that under ETH there is no $f(k) \cdot n^{o(k/\log k)}$-time algorithm for detecting left-aligned copies (the \LPPM problem)
\footnote{This reduction previously appeared in the author’s PhD thesis~\cite[Proposition 4.7]{Opler2022}.}.
In \cref{sec:inapprox}, we construct a gap-producing reduction from \LPPM that, on input $(\pi, \tau)$, produces an output $(\pi', \tau')$ such that $\tau'$ contains a very large number of $\pi'$-copies if there is a left-aligned $\pi$-copy in~$\tau$ and otherwise, $\tau'$ contains very few $\pi'$-copies.
The inapproximability of pattern counting (\cref{thm:inapprox}) then follows via standard arguments.

\section{Preliminaries}\label{sec:prelim}
\subparagraph{Permutations and point sets.}A \emph{permutation of length $n$} (or just \emph{$n$-permutation}) is a sequence $\pi = \pi_1, \dots, \pi_n$ in which each element of the set $[n] = \lbrace 1, 2, \dots, n\rbrace$ appears exactly once.
As customary, we omit commas and write, e.g., $15342$ for the permutation $1,5,3,4,2$ when there is no ambiguity.
It is often convenient to view permutations as point sets, namely as the point set $S_\pi = \{(i,\pi_i);\;i\in[n]\}$ in the plane.
See Figure~\ref{fig:perm}.
We refer to $S_\pi$ as the \emph{diagram of~$\pi$} and we freely move between sequence-based representation of permutations and their diagrams.
Observe that no two points in a permutation diagram share the same $x$- or $y$-coordinate.
We say that such a set is in \emph{general position}.

For a point $p$ in the plane, we let $p.x$ denote its horizontal coordinate, and $p.y$ its vertical coordinate.
Two finite sets $S, R \subseteq \mathbb{R}^2$ in general position are \emph{isomorphic} if there exists a bijection $f\colon S \to R$ that preserves the relative order of $x$- and $y$-coordinates, that is,
$f(p).x < f(q).x \Leftrightarrow p.x < q.x$ and
$f(p).y < f(q).y \Leftrightarrow p.y < q.y$
for all $p,q\in S$.
The \emph{reduction} of a finite set $S \subseteq \mathbb{R}^2$ in general position is the unique permutation $\pi$ such that $S$ is isomorphic to $S_\pi$.

We say that an $n$-permutation $\tau$ \emph{contains} a $k$-permutation $\pi$ (also referred to as a $k$-pattern) if the diagram of $\tau$ contains a subset that is isomorphic to the diagram of~$\pi$.
The witnessing injective function $f: S_\pi \to S_\tau$ is called an \emph{embedding of $\pi$ into $\tau$.}
We sometimes refer to the image $f(S_\pi)$ as the \emph{copy of $\pi$} (or just \emph{$\pi$-copy}).
See Figure~\ref{fig:perm}.
Additionally, we say that a $\pi$-copy (and the corresponding embedding) is \emph{left-aligned} if the leftmost point of $S_\pi$ (namely $(1,\pi_1)$) is mapped to the leftmost point of $S_\tau$ (namely $(1,\tau_1)$).
See Figure~\ref{fig:pattleft}.

\begin{figure}
	\captionsetup[subfigure]{justification=centering}
	\begin{subfigure}{0.2\textwidth}
			\centering
			\includegraphics[width=\textwidth]{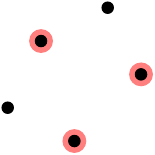}
			\caption{}\label{fig:perm}
	\end{subfigure}
	\hfill
	\begin{subfigure}{0.2\textwidth}
		\centering
		\includegraphics[width=\textwidth]{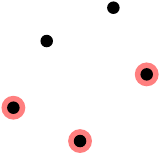}
		\caption{}\label{fig:pattleft}
	\end{subfigure}
	\hfill
	\begin{subfigure}{0.215\textwidth}
			\centering
			\includegraphics[width=\textwidth]{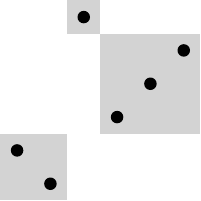}
			\caption{}\label{fig:infl}
	\end{subfigure}
	\hfill
	\begin{subfigure}{0.215\textwidth}
		\centering
		\includegraphics[width=\textwidth]{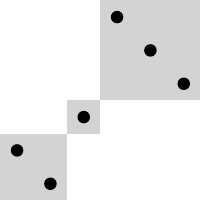}
		\caption{}\label{fig:layered}
	\end{subfigure}
	\caption{(a) Permutation $24153$ with a highlighted copy of the pattern $312$, (b) a left-aligned copy of the pattern $213$, (c) the inflation of $132$ by $21$, $1$ and $123$, and (d) a layered permutation with highlighted layers.}
\end{figure}

\subparagraph{Inflations, layered and co-layered permutations.}
Given an $n$-permutation $\sigma$ and $n$ non-empty permutations $\tau_1, \ldots, \tau_n$ the \emph{inflation} of $\sigma$ by $\tau_1, \dots, \tau_n$ is the permutation obtained by replacing each point $(i, \sigma_i)$ in the diagram of $\sigma$ with a suitably scaled copy of the diagram of $\tau_i$ and then taking the reduction of the obtained point set.
See Figure~\ref{fig:infl}.
We say that a permutation~$\pi$ is \emph{layered} if it is the inflation of an increasing $n$-permutation $\iota$ by $n$ decreasing permutations $\delta_1,\dots, \delta_n$.
We refer to $\delta_1, \dots, \delta_n$ as the \emph{layers} of~$\pi$.
See Figure~\ref{fig:layered}.
Symmetrically, a permutation is \emph{co-layered} if it is an inflation of a decreasing $n$-permutation $\delta$ with $n$ increasing permutations $\iota_1, \dots, \iota_n$.

\subparagraph{Parameterized complexity.}
Parameterized complexity provides a refined view of computational hardness by measuring complexity not only in terms of the input size but also with respect to a designated parameter\footnote{In our case, the parameter will be exclusively the length of the pattern.}.
An algorithm is called \emph{fixed-parameter tractable (FPT)} with respect to a parameter~$k$ if it runs in time $f(k)\cdot n^{\fO(1)}$, where $n$ denotes the input size and $f$ is a computable function.
Analogously to classical complexity theory, there are parameterized notions of intractability: problems that are \Wone-hard under parameterized reductions are widely believed not to admit FPT algorithms.
Lower bounds in parameterized complexity are often based on the \emph{Exponential Time Hypothesis (ETH)}, which roughly asserts that \textsc{3-SAT} cannot be solved in subexponential time with respect to the number of variables.
For a comprehensive introduction to the area, we refer the reader to the monograph by Cygan et al.~\cite{CyganFKLMPPS15}.

\section{Hardness of detecting left-aligned copies}
\label{sec:lppm}

We first show that it is hard to detect left-aligned patterns.
Formally, we define the problem \LPPM where we receive a $k$-pattern~$\pi$ and an $n$-permutation~$\tau$ (referred to as the text) as input and the task is to decide whether $\tau$ contains a left-aligned $\pi$-copy.
The work of Bruner and Lackner~\cite[Theorem 5.5]{BrunerL2013} implies the \Wone-hardness of \LPPM but, crucially, not the lower bound under ETH.

\begin{theorem}
	\label{thm:appm-hard}
	\LPPM is \Wone-hard with respect to $k$, and unless ETH fails, it cannot be solved in time $f(k) \cdot n^{o(k/ \log k)}$ for any function $f$, where $k$ is the length of the pattern and $n$ is the length of the text.
\end{theorem}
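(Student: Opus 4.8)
The plan is to give a parameterized reduction from a known \Wone-hard problem that also carries a tight ETH lower bound — the natural candidate is \textsc{Partitioned Subgraph Isomorphism} (equivalently \textsc{Partitioned PPM} / \textsc{Subgraph Isomorphism} on $k$-edge patterns), for which Marx's result gives no $f(k)\cdot n^{o(k/\log k)}$ algorithm under ETH. Concretely, I would start from an instance $(G,H)$ of \textsc{Partitioned Subgraph Isomorphism} where $H$ has $\ell$ vertices and $O(\ell)$ edges and $V(G)$ is partitioned into color classes $V_1,\dots,V_\ell$, and the question is whether $G$ contains a colorful copy of $H$. The target pattern length $k$ will be $O(\ell)$, so an $f(k)\cdot n^{o(k/\log k)}$ algorithm for \LPPM would yield an $f(\ell)\cdot N^{o(\ell/\log\ell)}$ algorithm for \textsc{Partitioned Subgraph Isomorphism}, contradicting ETH. (One can alternatively route through \textsc{Partitioned PPM}, which is already \Wone-hard with a matching ETH bound via Berendsohn--Kozma--Marx; the reduction from \textsc{Partitioned PPM} to \LPPM is cleaner, since partitioned instances already come with per-element location constraints.)

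The core of the construction is to encode the ``partitioned'' location constraints of each pattern element using the single left-alignment constraint. The idea: build the text $\tau$ as an inflation that places a small gadget for each color class in a designated horizontal band, and build the pattern $\pi$ so that one distinguished first element, together with a rigid ``scaffold'' of extreme points (e.g.\ a long increasing or layered run of very large / very small values that can only embed in one way), forces every other pattern element into the band corresponding to its prescribed color class. Anchoring the leftmost pattern point to the leftmost text point fixes the scaffold, and the scaffold in turn fixes which band each subsequent block of pattern points may use — effectively simulating, with one global anchor, the $k$ independent location restrictions of the partitioned instance. Adjacency constraints of $H$ are then encoded in the usual PPM style: the relative order of the two gadget-points chosen for the endpoints of an edge of $H$ is made consistent iff the corresponding vertices are adjacent in $G$, using the standard trick of representing each candidate vertex by a pair (or short block) of points whose internal order and cross-block order record incidences.

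I would carry this out in the following order. \emph{Step 1:} fix the source problem and recall its ETH lower bound ($f(k)\cdot n^{o(k/\log k)}$ ruled out), so that the final bound is inherited verbatim. \emph{Step 2:} describe the text $\tau$: a coarse outer permutation (layered or co-layered) whose blocks are the color-class bands, each band inflated by a gadget listing the vertices of that class, plus the extremal scaffold points that will be pinned by the anchor. \emph{Step 3:} describe the pattern $\pi$: its first element (the anchor), the scaffold, and for each vertex of $H$ a small block of points, with inter-block order chosen to encode $E(H)$. \emph{Step 4:} prove correctness — ``only if'': a left-aligned $\pi$-copy forces the scaffold into its unique position, hence each vertex-block lands in the band of its color class and picks a vertex of $G$; edge constraints of $\pi$ then certify a colorful $H$-copy. ``if'': conversely a colorful $H$-copy in $G$ yields a left-aligned embedding by placing each block on the chosen vertex's gadget and the anchor on the leftmost text point. \emph{Step 5:} bookkeeping: verify $k = O(\ell)$ and $|\tau| = \mathrm{poly}(|V(G)|)$, and that the construction is computable in polynomial time; conclude, and separately note the \Wone-hardness statement follows from the same reduction since it is a parameterized reduction.

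The main obstacle I expect is \emph{rigidity of the scaffold}: one must design the extremal point configuration so that, once the leftmost point of $\pi$ is glued to the leftmost point of $\tau$, the remaining scaffold points have a \emph{forced} unique target, with no ``slack'' that would let a vertex-block escape into the wrong band or let two blocks collide. This is delicate because a left-aligned embedding is still free everywhere except the first point, so the anchoring must propagate through order constraints rather than through a genuine positional lock. A clean way to handle this is to make the scaffold a long monotone run of values that are simultaneously larger (or smaller) than everything in the bands, chosen so the only order-preserving placement that also accommodates the anchor is the identity placement of the scaffold; then the bands, being ``squeezed'' between consecutive scaffold points, inherit a forced assignment to the vertex-blocks. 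Making this squeezing watertight — in particular ensuring the blocks cannot be permuted among bands and that nonadjacency in $G$ genuinely obstructs the corresponding order pattern — is the part that needs the most care, and is where the bulk of the case analysis in the full proof will go.
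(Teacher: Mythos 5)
Your proposal takes essentially the same route as the paper: reduce from \textsc{Partitioned Subgraph Isomorphism} (the paper notes it shares the skeleton of the Berendsohn--Kozma--Marx reduction to \textsc{Partitioned PPM}), encode $G$ and $H$ as adjacency-matrix-like point sets, and exploit the single left-alignment constraint to pin an anchor whose forced monotone scaffold then restricts each vertex-block to its color-class region, with edges certified by cell occupancy. The ``rigidity of the scaffold'' you correctly flag as the crux is resolved in the paper by a two-anchor scheme (left-alignment pins the top-left anchor, a longest-increasing-run argument pins the bottom-right one) together with making each color class's text blocks $C^i_j$, $D^i_j$ co-layered, so the increasing pairs $A_i$, $B_i$ sandwiched between the anchors can select at most one representative per class, and diagonal cell points force the row and column selections to coincide.
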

\begin{proof}[Proof of \cref{thm:appm-hard}]
	We reduce from the well-known problem \textsc{Partitioned Subgraph Isomorphism} (\PSI) sharing the basic structure with the reduction from \PSI to \textsc{Partitioned PPM} by Berendsohn, Kozma and Marx~\cite{BerendsohnKM21}.
	The input to \PSI consists of two graphs $G = (V_G, E_G)$ and $H = (V_H, E_H)$ together with a coloring $\chi\colon V_H \to V_G$ of vertices of $H$, using the vertices of $G$ as colors; and the task is to decide if there is a mapping $\phi\colon V_G \to V_H$ such that whenever $\{u,v\} \in E_G$ then also $\{\phi(u), \phi(v)\} \in E_H$ and moreover, $\chi(\phi(v)) = v$ for every $v \in V_G$.
	The problem \PSI{} is \Wone-complete with respect to $k = |E_G|$ and Marx~\cite[Cor. 6.3]{Marx10} 
	showed that it cannot be solved in time $f(k) \cdot n^{o(k/ \log k)}$ unless ETH fails.
	Moreover, this holds even if we require $G$ to have the same number of vertices and edges (see e.g. \cite{Bringmann0MN16}).
	
	Let $(G,H,\chi)$ be an instance of \PSI where $|V_G| = |E_G|$ and set $n = |V_H|, k = |V_G|$ (refer to Figure~\ref{fig:appm-hard-instance}).
	We assume that the vertex set $V_G$ is in fact equal to $[k]$ and we define for each $i \in[k]$ the set $V_i \subseteq V_H$ as the set of vertices of $H$ colored by $i$, i.e., $V_i = \chi^{-1}(i)$.
	Notice that $V_1, \ldots, V_k$ form a partition of the set $V_H$.
	
	We shall construct two point sets $P$ and $T$ not necessarily in general position such that $P$ will represent the adjacency matrix of $G$ while $T$ will represent the adjacency matrix of $H$.
	Afterwards, the permutations $\pi$ and $\tau$ are obtained as reductions of a small clockwise rotation of the point sets $P$ and $T$ respectively.
	Observe that a sufficiently small rotation preserves all relative orderings between pairs of elements with non-equal coordinates.
  Alternatively, we can achieve the same relative position of points by mapping each point $(x,y)$ to a point $(x + \delta \cdot y, y - \delta\cdot x)$ for sufficiently small but positive~$\delta$.
	
	\subparagraph{Constructing the pattern~$\pi$.}
	We start with the description of the set $P$.
	It contains two points $a^P_1$ and $a^P_2$ defined as
	\[a^P_1 = (1, 2k + 2), \qquad a^P_2 = (2k+2, 1).\]
	We refer to $a^P_1$ and $a^P_2$ as the \emph{anchors}.

	For each $i \in [k]$, we associate two pairs of points to the vertex $i$ defined as
	\begin{align*}
		A_i &= \{ (2i, 3i + 2k), (2i + 1, 3i + 2k + 2)\}, \\
		B_i &= \{ (3i + 2k, 2i), (3i + 2k + 2, 2i + 1)\}.
	\end{align*}
	Every pair $A_i$ lies horizontally between the anchors and every pair $B_i$ lies vertically between the anchors.
	The pairs $A_1, \dots, A_k$ form together an increasing permutation of length $2k$ and the same holds for the pairs $B_1, \dots, B_k$.
	The pairs naturally impose a grid-like structure.
	We define the \emph{$A_i$-row} as the horizontal strip enclosed by the pair $A_i$, the \emph{$B_j$-column} as the vertical strip enclosed by $B_j$ and the \emph{$(A_i, B_j)$-cell} as their intersection.

	For each edge $\{i,j\} \in E_G$, we simply add points to the $(A_i, B_j)$-cell and the $(A_j, B_i)$-cell, i.e., we add to $P$ the points
	\[ (3i + 2k+1, 3j + 2k+1), \quad (3j + 2k+1, 3i + 2k+1).\]
	Additionally, we also add a point to each cell on the diagonal, i.e., we add the point $(3i + 2k+1, 3i + 2k+1)$ for every $i$.
	
	That wraps up the definition of $P$.
	We rotate $P$ clockwise slightly to guarantee that it is in general position and take the permutation $\pi$ as its reduction.
	See Figure~\ref{fig:appm-hard-pattern}.
	The length of $\pi$ is $\fO(|V_G| + |E_G|) \in \fO(k)$ since we assumed that $|V_G| = |E_G|$.
	
	\begin{figure}
		\begin{subfigure}[c]{0.44\textwidth}
		\centering
		\begin{subfigure}{\textwidth}
			\centering
			\includegraphics{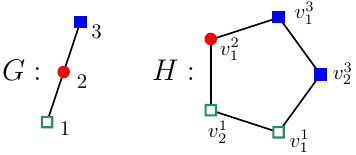}
			\caption{Instance $(G, H, \chi)$ of \PSI.}\label{fig:appm-hard-instance}
		\end{subfigure}
		\begin{subfigure}{\textwidth}
			\centering
			\vspace{0.22in}
      \includegraphics[scale=0.975]{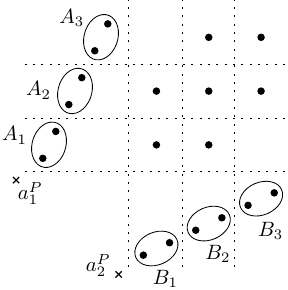}
			\caption{The produced point set $P$.}\label{fig:appm-hard-pattern}
		\end{subfigure}
		\end{subfigure}
		\begin{subfigure}[c]{0.55\textwidth}
			\centering
      \includegraphics[scale=0.975]{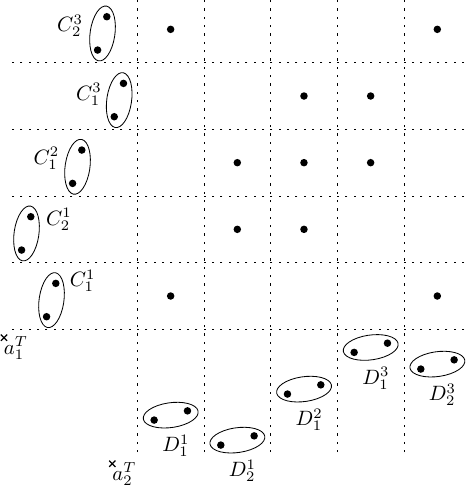}
			\caption{The produced point set $T$.}\label{fig:appm-hard-text}
		\end{subfigure}
		\caption{Illustration of the reduction in \cref{thm:appm-hard}. The permutations $\pi$ and $\tau$ are obtained as reductions of a small clockwise rotation of the point sets $P$ and $T$ respectively.}
		\label{fig:appm-hard}
	\end{figure}
	
	\subparagraph{Constructing the text~$\tau$.}
	Now, we shift our attention to the point set $T$.
	It again contains two anchors $a^T_1$ and $a^T_2$ defined as
	\[a^T_1 = (1, 2n + 2), \qquad a^T_2 = (2n+2, 1).\]
	
	For each $i \in [k]$, we set $n_i = |V_i|$ and choose an arbitrary order of vertices in $V_i$ denoting them $v^i_j$ for $j \in [n_i]$.
	To every vertex $v^i_j$, we associate two values -- the \emph{rank of $v^i_j$} denoted by $\alpha^i_j$ and the \emph{reverse rank of $v^i_j$} denoted by $\beta^i_j$ where
	\begin{equation}
		\label{eq:rank}
		\alpha^i_j = \sum_{i' < i} n_{i'} + j-1 \quad \text{ and } \quad \beta^i_j = \sum_{i'< i} n_{i'} + n_i -j.
	\end{equation}
	Observe that the rank corresponds to the lexicographic order of $v^i_j$ by $(i,j)$ and the reverse rank corresponds to the lexicographic order by $(i, n_i - j)$.
	
	For every $i \in [k]$ and $j \in [n_i]$, we add to $T$ two pairs of points associated to the vertex $v^i_j$
	\begin{align*}
		C^i_j &= \{ (2 \beta^i_j, 3\alpha^i_j + 2k), (2 \beta^i_j + 1, 3\alpha^i_j + 2k + 2)\}, \\
		D^i_j &= \{ (3\alpha^i_j + 2k, 2\beta^i_j), (3\alpha^i_j + 2k + 2, 2\beta^i_j + 1)\}.
	\end{align*}
	Every pair $C^i_j$ again lies horizontally between the anchors while every pair $D^i_j$ lies vertically between the anchors.
	For a fixed $i$, the pairs $C^i_1, \ldots, C^i_{n_i}$ form a co-layered permutation with each layer consisting of a single pair, and the same holds for the pairs $D^i_1, \ldots, D^i_{n_i}$.
	Moreover for different $i < j$, the pairs $C^i_1, \ldots, C^i_{n_i}$ lie all to the left and below the pairs $C^j_1, \ldots, C^j_{n_j}$.
	The same holds for the pairs $D^i_j$.
	
	We define $D^i_j$-columns, $C^{i'}_{j'}$-rows and $(D^i_j, C^{i'}_{j'})$-cells analogously to before. 
	Finally, we add a point to the $(D^i_j, C^{i'}_{j'})$-cell for $i, i' \in [k]$ and $j \in [n_i], j' \in [n_{i'}]$ whenever either $i = i'$ and $j = j'$ (i.e. on the main diagonal), or $i \neq i'$ and $\{v^i_j, v^{i'}_{j'}\} \in E_H$.
	Formally, such a point is defined as $(3\alpha^i_j + 2k+1, 3\alpha^{i'}_{j'} + 2k+1)$.
	In other words, every non-empty cell either lies on the diagonal or corresponds to an edge between two vertices that do not share the same color.
	
	Finally, we rotate $T$ clockwise to guarantee general position and take $\tau$ as its reduction.
	See Figure~\ref{fig:appm-hard-text}.
	The length of $\tau$ is $\fO(|V_H| + |E_H|)$ which is clearly bounded by $\fO(n^2)$.
	
	\subparagraph{Correctness (``only if'').}
	Suppose that $(G,H,\chi)$ is a yes-instance of \PSI.
	There is a witnessing mapping $\phi\colon [k] \to \mathbb{N}$ such that $\phi(i) \in [n_i]$ for 
	every $i \in [k]$ and $\{v^i_{\phi(i)}, v^j_{\phi(j)}\} \in E_H$ for every different $i, j \in [k]$ such that $\{i,j\}\in E_G$.
	
	We define a left-aligned embedding $\psi$ of $\pi$ into  $\tau$ by mapping the elements of~$\pi$ as follows.
	First, we take care of the left-aligned property by mapping the anchors in~$\pi$ to the anchors in~$\tau$.
	We map the pair $A_i$ to $C^i_{\phi(i)}$ and $B_i$ to $D^i_{\phi(i)}$ for each $i \in [k]$.
	It is sufficient to argue that every non-empty $(A_i, B_j)$-cell in $\pi$ maps to a non-empty cell in $\tau$.
	This follows immediately for the cells in $\pi$ on the diagonal.
	Otherwise if $i \neq j$, we have $\{i, j\} \in E_G$ so there must be an edge $\{v^i_{\phi(i)}, v^j_{\phi(j)}\}$ in~$H$ and thus, the $(D^j_{\phi(j)},C^i_{\phi(i)})$-cell is non-empty.
	
	\subparagraph{Correctness (``if'').}
	Suppose there exists a left-aligned embedding of $\pi$ into~$\tau$.
	The key idea is that the anchors and monotone sequences force the embedding to respect the grid structure induced by the pairs $A_i,B_i$ and $C^i_j,D^i_j$.
	We claim that any embedding of $\pi$ into $\tau$ that maps the anchor $a^P_1$ to the anchor $a^T_1$ must also map $a^P_2$ to the anchor $a^T_2$.
	This holds since the points of $\pi$ below $a^P_1$ form an increasing sequence of length exactly $2k + 1$ starting with the point $a^P_2$ while any longest increasing sequence in $\tau$ below $a^T_1$ is also of length exactly $2k + 1$ and starts with the point $a^T_2$.
	
	The pairs $A_1, \dots, A_k$ form an increasing sequence of length $2k$ sandwiched horizontally between the anchors of $\pi$.
	Therefore, they must all be mapped to the union of all the pairs $C^i_j$ since these are the only points in the horizontal strip between the anchors in $\tau$.
	However, the only increasing subsequences of length $2k$ in this strip are of the form $C^1_{i_1}, C^2_{i_2}, \dots, C^k_{i_k}$ and in particular, the pair $A_i$ is mapped to the pair $C^i_{j}$ for some $j$.
	Let $\phi\colon [k] \to \mathbb{N}$ be the mapping such that $A_i$ is mapped precisely to $C^i_{\phi(i)}$ for every $i \in [k]$.
	
	The same argument can be applied to the pairs  $B_1, \dots, B_k$ and we define a mapping $\phi'\colon [k] \to \mathbb{N}$ such that $B_i$ is mapped precisely to $D^i_{\phi'(i)}$  for every $i \in [k].$
	
	Recall that there is a point in $\pi$ in the $(A_i, B_i)$-cell for every $i \in [k]$.
	The only non-empty cells in $\tau$ between vertices of the same color in $H$ are on the diagonal and thus, we have that $\phi(i) = \phi'(i)$ for every $i \in [k]$.
	It remains to verify that $\{v^i_{\phi(i)}, v^j_{\phi(j)}\} \in E_H$ whenever $\{i,j\}$ is an edge in $G$.
	For every $\{i,j\} \in E_G$, there is a point in the $(A_i, B_j)$-cell in $\pi$ that must be mapped to a point in the $(C^i_{\phi(i)},D^j_{\phi(j)})$-cell in~$\tau$.
	Therefore, the $(C^i_{\phi(i)},D^j_{\phi(j)})$-cell is non-empty and we have $\{v^i_{\phi(i)}, v^j_{\phi(j)}\} \in E_H$.
\end{proof}

To demonstrate the usefulness of \cref{thm:appm-hard}, let us use it to derive the hardness of pattern counting in an elementary way.
The original proof by Berendsohn, Kozma and Marx~\cite{BerendsohnKM21} uses a similar idea but it requires a more complicated application of the inclusion-exclusion principle to reduce from \textsc{Partitioned PPM}.

\begin{theorem}[{\cite[Theorem 4]{BerendsohnKM21}}]
	\label{thm:sppm-hard}
There is no algorithm that counts the number of copies of a given $k$-pattern in time $f(k)\cdot n^{o(k/ \log k)}$ for any function~$f$, unless ETH fails.
\end{theorem}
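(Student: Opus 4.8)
The plan is to give a very short Turing reduction from \LPPM to pattern counting, relying on \cref{thm:appm-hard}. Given a \LPPM instance $(\pi,\tau)$ with $\pi$ of length $k$ and $\tau$ of length $n$, let $\tau^-$ be the $(n-1)$-permutation obtained from $\tau$ by deleting its first entry $\tau_1$ and reducing. The observation I would exploit is that every $\pi$-copy in $\tau$ has a well-defined leftmost used position $i_1$ (which necessarily plays the role of $\pi_1$, the leftmost point of $\pi$): either $i_1=1$, in which case the copy is left-aligned, or $i_1\ge 2$, in which case the copy lies entirely within positions $\{2,\dots,n\}$ and thus restricts to a $\pi$-copy in $\tau^-$. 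This restriction is a bijection between non-left-aligned $\pi$-copies of $\tau$ and $\pi$-copies of $\tau^-$, so, writing $N(\sigma,\rho)$ for the number of $\rho$-copies in $\sigma$,
\[
  \#\{\text{left-aligned }\pi\text{-copies in }\tau\}\;=\;N(\tau,\pi)\;-\;N(\tau^-,\pi).
\]
In particular $\tau$ contains a left-aligned $\pi$-copy if and only if $N(\tau,\pi)>N(\tau^-,\pi)$.

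From here the argument is immediate: suppose for contradiction that there were an algorithm counting $k$-pattern copies in time $f(k)\cdot n^{o(k/\log k)}$. Then, given $(\pi,\tau)$, I would invoke it twice — once on $(\pi,\tau)$ and once on $(\pi,\tau^-)$, both instances having a pattern of length $k$ and a text of length at most $n$ — and output ``yes'' precisely when the first count exceeds the second. The total running time is still $f(k)\cdot n^{o(k/\log k)}$ up to the constant factor $2$, and since this is also a parameterized reduction that preserves $k$, it contradicts both the \Wone-hardness and the ETH lower bound of \cref{thm:appm-hard}. Hence no such counting algorithm exists unless ETH fails.

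I do not expect a genuine obstacle here; the only point worth stressing is why this reduction is so much simpler than the one via \textsc{Partitioned PPM} in \cite{BerendsohnKM21}. There, isolating the ``good'' embeddings from the full count requires an inclusion–exclusion over all subsets of color classes (one sub-instance per choice of which classes are deactivated), so the alternating sum over exponentially many sub-counts is the technical heart of the argument; here, having \LPPM available reduces that inclusion–exclusion to a single subtraction of two counts over texts that differ in one point, which is why the whole proof collapses to the two lines above.
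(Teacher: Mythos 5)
Your proof is correct and follows essentially the same route as the paper: delete the leftmost entry of $\tau$, subtract the two exact counts to isolate the left-aligned copies, and invoke \cref{thm:appm-hard}. Even your closing remark comparing this single subtraction to the inclusion--exclusion over color classes in \cite{BerendsohnKM21} mirrors the paper's own commentary.
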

\begin{proof}[Alternative proof of \cref{thm:sppm-hard}]
	We show that an algorithm for exact counting in time $f(k) \cdot n^{o(k /\log k)}$ could be used to design an algorithm deciding \LPPM in time $g(k) \cdot n^{o(k /\log k)}$.
	We conclude that an algorithm with such running time would refute ETH via \cref{thm:appm-hard}.
	
	Suppose $(\pi, \tau)$ is an instance of \LPPM{}.
	Let $\#\pi(\tau)$ denote the number of $\pi$-copies in~$\tau$ and
	let $\tau'$ be the permutation obtained from~$\tau$ by deleting its leftmost point.
	Every copy of~$\pi$ is either left-aligned or not and therefore, the number of left-aligned $\pi$-copies is exactly $\#\pi(\tau) - \#\pi(\tau')$.
	It follows that we can compute the number of left-aligned copies by invoking the algorithm for exact counting twice and in particular, we can decide whether there is a left-aligned $\pi$-copy in $\tau$ or not.
	Therefore, we obtain an algorithm solving \LPPM{} in time $2 \cdot f(k) \cdot n^{o(k/\log k)}$.
\end{proof}

\section{Inapproximability of counting}
\label{sec:inapprox}

We first show that \LPPM can be reduced to instances of pattern counting where the text either contains very many or very few copies of the pattern.
Formally, we consider for any fixed $\eps$ such that $0 < \eps < 1/2$ the \GapSPPM problem.
The input to \GapSPPM consists of a $k$-pattern~$\pi$ and a text $\tau$ of length~$n$, and the task is to distinguish whether the number of $\pi$-copies in $\tau$ is at least $n^{(1-\eps)k}$ or at most $n^{\eps k}$.

\begin{theorem}
\label{thm:gap-hard}
For arbitrary $0 < \eps < 1/2$, \GapSPPM is \Wone-hard and cannot be solved in time $f(k)\cdot n^{o(k/ \log k)}$ for any function~$f$, unless ETH fails.
\end{theorem}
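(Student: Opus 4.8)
The plan is to give a gap-producing reduction from \LPPM, whose \Wone-hardness and ETH lower bound are supplied by \cref{thm:appm-hard}. Starting from an \LPPM instance $(\pi,\tau)$ with $|\pi| = k$ and $|\tau| = n$, I would construct a pair $(\pi',\tau')$ with $k' := |\pi'| = \Theta(k)$ and $n' := |\tau'| \le \mathrm{poly}(n)$, computable in time $\mathrm{poly}(n) + f(k)$, such that $\tau'$ contains at least $(n')^{(1-\eps)k'}$ copies of $\pi'$ whenever $\tau$ admits a left-aligned $\pi$-copy, and very few copies --- in any case fewer than $(n')^{\eps k'}$ --- otherwise. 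Since $k'=\Theta(k)$, an algorithm for \GapSPPM running in time $f(k')\cdot (n')^{o(k'/\log k')}$ would solve \LPPM in time $g(k)\cdot n^{o(k/\log k)}$, contradicting ETH; the reduction being a parameterized one, \Wone-hardness follows in the same manner.

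The construction rests on two ideas. The first is \emph{amplification}: fixing a polynomial threshold $M = M(n,\eps)$, I would inflate every point of $\tau$ other than its two anchors by an increasing run of length $M$ (and, if convenient, inflate the non-anchor points of $\pi$ by an increasing run of some fixed length $m = m(\eps)$), so that one left-aligned copy of $\pi$ spreads into many copies of $\pi'$. The second is \emph{anchoring}: I would exploit that in the instances produced by \cref{thm:appm-hard} the leftmost point of $\pi$ already acts as an anchor --- together with the symmetric bottom-right anchor $a^P_2$ and the surrounding monotone sequences --- so that a copy of $\pi$ is left-aligned exactly when it sends this leftmost point to the leftmost point of $\tau$; to this I would attach a short prefix gadget to both $\pi$ and $\tau$ (a suitably placed monotone block of $\fO(1)$ points near the anchors) whose only purpose is to make this correspondence \emph{forced} in every embedding of $\pi'$ into $\tau'$, using the same longest-increasing-subsequence bookkeeping as in the proof of \cref{thm:appm-hard}. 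Finally $\pi'$ and $\tau'$ are the reductions of the resulting point sets; $k'=\Theta(k)$ since the only inflation applied to $\pi$ uses runs of constant length.

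For completeness, suppose $\tau$ has a left-aligned $\pi$-copy $\psi$. Extending $\psi$ in the obvious way respects both the anchors and the prefix gadget, forcing each of the $\Theta(k)$ inflated points of $\pi'$ into the length-$M$ run of $\tau'$ prescribed by $\psi$; the corresponding run of $\pi'$ embeds into it in $\binom{M}{m}$ ways, and these choices are independent across the runs, so $\tau'$ contains at least $\binom{M}{m}^{\Theta(k)} \ge (M/m)^{\Theta(k)}$ copies of $\pi'$. Since $n' = \Theta(nM)$ and $k' = \Theta(k)$, a routine computation of exponents shows that taking $M := n^{(1-\eps)/\eps}$ (rounded up and multiplied by a factor depending only on $\eps$), with $m$ a suitable constant, makes this quantity exceed $(n')^{(1-\eps)k'}$.

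For soundness, suppose $\tau$ has no left-aligned $\pi$-copy. The claim is that the prefix gadget forces every embedding of $\pi'$ into $\tau'$ to send the anchor of $\pi$ to the anchor of $\tau$, and then --- by the grid-like rigidity inherited from \cref{thm:appm-hard}, where anchors and long monotone sequences pin down how the inflated blocks must line up --- to restrict to a left-aligned embedding of $\pi$ into $\tau$; since none exists, $\tau'$ has essentially no copy of $\pi'$, far fewer than $(n')^{\eps k'}$. This soundness claim is the main obstacle. An embedding of $\pi'$ into $\tau'$ need not respect the intended block decomposition --- it might place several points of one part of $\pi$ inside a single inflated run of $\tau$, or mix run-points with gadget-points --- and one must verify that the gadget, together with the monotone-sequence arguments of \cref{thm:appm-hard}, excludes all such degenerate embeddings (or leaves only a polynomially bounded number of them) while adding just $\fO(1)$ points, so that $k'$ remains linear in $k$. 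The rest --- fixing $m$ and $M$, bounding $k'$ and $n'$, and verifying the running time of the composed reduction --- is routine.
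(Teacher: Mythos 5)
Your plan misses the key structural idea that makes soundness work, and as you yourself flag, the soundness argument is ``the main obstacle''---but the obstacle is more fundamental than a bookkeeping issue with a gadget.

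The paper's reduction inflates \emph{only the leftmost point} of $\tau$ (by a layered permutation with $\alpha k$ layers, each of length $n^\alpha$, where $\alpha = \lceil 2/\eps\rceil$) and only the leftmost point of $\pi$ (by an increasing run of length $\alpha k$). This asymmetry is the whole point: in the no-case, any $\pi'$-copy either avoids the initial block of $\tau'$ entirely---and then it lives inside a set of only $n-1$ points, so there are at most $\binom{n-1}{k'} \le n^{k'} \ll (n')^{\eps k'}$ of them---or it uses the initial block, in which case the layered structure forces the increasing run of $\pi'$ to hit at most $\alpha k$ of those elements, and deflating the run yields a left-aligned $\pi$-copy in $\tau$, a contradiction. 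The choice of a \emph{layered} permutation (so that any increasing subsequence picks at most one point per layer) is what bounds the usage of the initial block, and the choice to inflate only one point is what keeps the ``rest of $\tau'$'' tiny. No gadget is needed, and the reduction works for arbitrary \LPPM instances, not just those produced by \cref{thm:appm-hard}.

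By contrast, your plan inflates \emph{every} non-anchor point of $\tau$ by a run of length $M$. This uniformly amplifies \emph{all} $\pi$-copies in $\tau$, not just left-aligned ones: a non-left-aligned copy of $\pi$ in $\tau$ (which will typically exist even in no-instances) spreads into roughly $M^{\Theta(k)}$ copies of $\pi'$ in $\tau'$, exactly as a left-aligned one would, so the gap collapses. The $\fO(1)$-point ``prefix gadget'' you propose cannot repair this, because the offending embeddings do not touch the anchors or the gadget at all---they live entirely among the other inflated blocks. To make your approach work you would need either a mechanism that makes the initial block essential to \emph{every} embedding (which is exactly what the paper achieves by concentrating all the ``bulk'' of $\tau'$ in the single initial block) or a way to subtract the contribution of non-left-aligned copies, which is not available in a gap-producing reduction. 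So the gap in your proposal is not merely unverified; the construction as described would fail soundness.
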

\begin{proof}
For a fixed $\eps$, we describe a parameterized reduction from \LPPM.
Let $(\pi, \tau)$ be an instance of \LPPM and set $\alpha = \lceil 2/\eps \rceil$.

\begin{figure}
\begin{subfigure}[c]{.45\textwidth}
\begin{subfigure}[T]{\textwidth}
\centering
\includegraphics[scale=0.9]{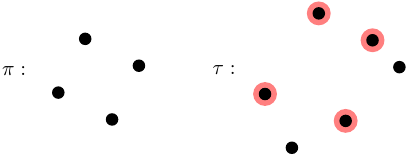}
\vspace{0.05in}
\caption{Instance $(\pi, \tau)$ of \LPPM with highlighted left-aligned $\pi$-copy in $\tau$.}
\label{fig:gap-hard-instance}
\end{subfigure}
\begin{subfigure}[T]{\textwidth}
\centering
\vspace{0.2in}
\includegraphics[scale=0.9]{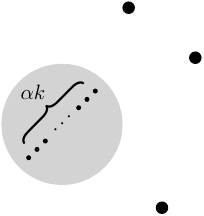}
\vspace{0.05in}
\caption{The produced pattern $\pi'$.}
\label{fig:gap-hard-pattern}
\end{subfigure}
\end{subfigure}
\hfill
\begin{subfigure}[c]{.49\textwidth}
	\centering
	\includegraphics{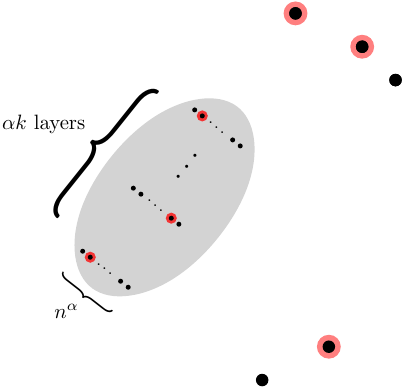}
	\vspace{0.05in}
	\caption{The produced text $\tau'$. One possible $\pi'$-copy corresponding to the left-aligned $\pi$-copy in~$\tau$ is highlighted.}
\label{fig:gap-hard-text}
\end{subfigure}
\hfill
	\caption{Illustration of the reduction in \cref{thm:gap-hard}. The initial blocks in both permutations $\pi'$ and $\tau'$ of the produced instance are highlighted.}
	\label{fig:gap-hard}
\end{figure}

First, if $n < ((\alpha + 1) \cdot k)^{2\alpha/\eps}$, we simply count the number of left-aligned $\pi$-copies exactly and return a trivial yes- or no-instance of \GapSPPM.
Observe that this takes only $f(k)$ time for some computable function $f$, even when we enumerate over all $k$-tuples in~$\tau$.

Otherwise, we take $\pi'$ to be the permutation obtained from~$\pi$ by inflating its leftmost element with an increasing sequence of length~$\alpha k$, and we let $\tau'$ be the permutation obtained from~$\tau$ by inflating its leftmost element with layered permutation consisting of $\alpha k$ layers, each of length~$n^\alpha$.
We refer to the parts obtained by inflating the leftmost elements in both $\pi'$ and $\tau'$ as their \emph{initial blocks}.
See Figure~\ref{fig:gap-hard}.

We denote by $k'$ and $n'$ the length of $\pi'$ and $\tau'$ respectively.
Observe that $k'$ is equal to $\alpha k + (k-1) \le (\alpha + 1) \cdot k$, i.e., linear in the length of $\pi'$, and $n' = n-1 + \alpha k \cdot n^\alpha$, i.e., polynomial in the size of the input instance.
We will later need the following lower and upper bounds on $n'$
\begin{equation}\label{ineq:k-and-n}
 n^{\alpha} \le n' \le (\alpha + 1) \cdot k \cdot n^{\alpha} \le n^{\eps / (2 \alpha)} \cdot n^{\alpha}
\end{equation}
where the last inequality holds since we guaranteed that $n \ge ((\alpha + 1) \cdot k)^{2\alpha/\eps}$.

\subparagraph{$(\pi, \tau)$ is a yes-instance.} First, assume that $(\pi, \tau)$ is a yes-instance of \LPPM, i.e., there is a left-aligned $\pi$-copy in $\tau$.
In this case, we obtain many corresponding $\pi'$-copies in $\tau'$ since we can choose for each of the first $\alpha k$ elements one out of $n^\alpha$ options (see Figure~\ref{fig:gap-hard-text}).
In total, there are at least $(n^\alpha)^{\alpha k} = n^{\alpha^2 k}$ such embeddings.
We now lower-bound the number of $\pi'$-copies by expressing it in terms of $n'$ and $k'$.
The number of $\pi'$-copies in $\tau'$ is at least
\begin{multline}\label{eq:gap-yes}
n^{\alpha^2 k} \ge \left(\frac{n'}{n^{\eps / (2 \alpha)}}\right)^{\alpha k} \ge
n^{-\frac{\eps}{2} \cdot k} \cdot \left(n'\right)^{\frac{\alpha}{\alpha+1}k'} >
(n')^{-\frac{\eps}{2} \cdot k'} \cdot \left(n'\right)^{\left(1-\frac{1}{\alpha+1}\right)\cdot k'} \\\ge
(n')^{-\frac{\eps}{2}\cdot k'} \cdot \left(n'\right)^{\left(1-\frac{\eps}{2}\right)\cdot k'}
= \left(n'\right)^{\left(1-\eps\right)\cdot k'}
\end{multline}
where the first inequality follows from the upper bound on $n'$ in~\eqref{ineq:k-and-n}, the second inequality is obtained by using the bound $k \ge  \frac{k'}{\alpha + 1}$ in the exponent, the third follows since $n^k < (n')^{k'}$,  and the penultimate inequality holds since $\alpha + 1 \ge \left\lceil \frac{2}{\eps} \right\rceil + 1 \ge \frac{2}{\eps}$.

\subparagraph{$(\pi, \tau)$ is a no-instance.}Conversely, assume that $(\pi, \tau)$ is a no-instance of \LPPM.
This does not necessarily imply that $\tau'$ completely avoids $\pi'$.
However, we claim that no $\pi'$-copy can contain a point from the initial block of~$\tau'$.
Intuitively, using any element of the initial block would force the existence of a left-aligned $\pi$-copy in $\tau$, contradicting the assumption.

First, let us show that every $\pi'$-copy in $\tau'$ contains at most $\alpha k$ elements from the initial block of~$\tau'$.
For contradiction, assume that there is an embedding of $\pi'$ in~$\tau'$ that maps at least $\alpha k + 1$ leftmost elements of $\pi'$ to the initial block of $\tau'$.
The leftmost $\alpha k$ elements of $\pi'$ form an increasing sequence and thus, they must be mapped to pairwise different layers in the initial block of~$\tau'$.
In particular, the element $\pi'_{\alpha k}$ is mapped to the last layer and, as a consequence, the same must be true for $\pi'_{\alpha k +1}$.
It follows that $\pi'_{\alpha k + 1}$ is mapped to a point to the right and below the image of $\pi'_{\alpha k}$ but above the image of $\pi'_{\alpha k - 1}$.
This is not possible since both $\pi'_{\alpha k - 1}$ and $\pi'_{\alpha k}$ belong to the inflation of the leftmost point in~$\pi$ while $\pi'_{\alpha k +1}$ does not.
It is important here that our choice of $\alpha$ implies $\alpha k \ge 2$.

Now let us assume that there is a $\pi'$-copy in $\tau'$ that contains at least one element of the initial block.
We already know that this copy uses at most $\alpha k$ elements from the initial block of~$\tau'$ and therefore, we would obtain a left-aligned $\pi$-copy in $\tau$ by deflating the initial blocks in both $\pi'$ and $\tau'$ back to single elements.
Therefore, every $\pi'$-copy in~$\tau'$ completely avoids the initial block of~$\tau'$ and the total number of copies is at most
\begin{equation}\label{eq:gap-no}
\binom{n-1}{k'} \le n^{k'} \le (n')^{\frac{k'}{\alpha}} < (n')^{ \eps \cdot k'}
\end{equation}
where the first inequality is trivial, the second inequality follows from \eqref{ineq:k-and-n} and the final inequality holds since $\frac{1}{\alpha} \le \frac{\eps}{2} < \eps$.

\subparagraph{}
We have shown that if there is a left-aligned $\pi$-copy in $\tau$, there are at least $(n')^{(1-\eps)\cdot k'}$ $\pi'$-copies in $\tau'$ by~\eqref{eq:gap-yes} and otherwise, there are at most $(n')^{ \eps \cdot k'}$ such copies by~\eqref{eq:gap-no}.
Therefore, an algorithm deciding \GapSPPM in time $f(k) \cdot n^{o(k / \log k)}$ can be used to decide \LPPM in time $f(k') \cdot (n')^{o(k' / \log k')}$ which would refute ETH through \cref{thm:appm-hard} since $k' \in \fO(k)$ and $n' \le (\alpha + 1) \cdot k \cdot n^{\fO(1)}$.
\end{proof}

Finally, we can prove \cref{thm:inapprox} by showing that any $n^{(1/2-\eps)\cdot k}$-approximation algorithm can be used to decide $(n^{\eps' k}, n^{(1-\eps') \cdot k})$\textsc{-GAP \#PPM} for some positive $\eps'$.

\mainresult*
\begin{proof}
We assume that such an approximation algorithm exists for some $\eps > 0$ and we show that \GapSPPMh can be decided by its single invocation.
On a given instance~$(\pi, \tau)$, we simply run the approximation algorithm and answer positively if and only if it reports that the number of $\pi$-copies in $\tau$ is greater than $n^{\frac{1}{2}k}$.

If $(\pi, \tau)$ is a yes-instance of \GapSPPMh, there are at least $n^{(1 - \eps/2) \cdot k}$ copies of~$\pi$ in~$\tau$.
Thus, the number of $\pi$-copies reported by the approximation algorithm is at least
\[\frac{n^{\left(1 - \frac{\eps}{2}\right) \cdot k}}{n^{\left(\frac12-\eps\right) \cdot k}} = n^{\left(\frac12 + \frac{\eps}{2}\right) \cdot k} > n^{\frac12 k}. \]

Otherwise, $(\pi, \tau)$ is a no-instance with at most $n^{(\eps/2) \cdot k}$ copies of $\pi$ in $\tau$ and the output of the algorithm is at most
\[n^{\left(\frac{\eps}{2}\right) \cdot k} \cdot n^{\left(\frac12-\eps\right)\cdot k} = n^{\left(\frac12 - \frac{\eps}{2}\right) \cdot k} < n^{\frac12 k}. \]

It follows that the assumed approximation algorithm can be used to correctly decide \GapSPPMh in the exact same runtime.
Therefore, an $n^{(1/2-\eps)\cdot k}$-approximation algorithm running in $f(k) \cdot n^{o(k/\log k)}$ time would refute ETH through \cref{thm:gap-hard}.
\end{proof}

Finally, we remark that analogous arguments rule out the existence of $f(k)\cdot n^{o(k/ \log k)}$-time approximation algorithms with one-sided multiplicative error of $n^{(1-\eps) k}$ for every $\eps > 0$.

\bibliography{main}
\end{document}